\documentclass[11pt]{article}
\usepackage[T1]{fontenc}
\usepackage{lmodern}
\usepackage{microtype}
\usepackage{amssymb}
\usepackage[a4paper,margin=1in]{geometry}
\usepackage{authblk}
\usepackage{amsmath,amsthm,mathtools}
\usepackage{enumitem}
\usepackage{algorithm}
\usepackage{algpseudocode}
\usepackage[unicode,hidelinks]{hyperref}
\providecommand{\Description}[1]{}

\newtheorem{thm}{Theorem}[section]
\newtheorem{lem}[thm]{Lemma}
\newtheorem{prop}[thm]{Proposition}

\theoremstyle{definition}
\newtheorem{defn}[thm]{Definition}
\newtheorem{construction}[thm]{Construction}
\theoremstyle{remark}

\newcommand{\E}{\mathbb E}
\newcommand{\Q}{\mathbb Q}
\newcommand{\R}{\mathbb R}
\newcommand{\ind}{\mathbf{1}}
\DeclareMathOperator{\CTR}{CTR}
\DeclareMathOperator{\OPT}{OPT}

\newcommand{\calR}{\mathcal R}

\title{A Factor-6 Truthful-in-Expectation Mechanism for Cascade Position Auctions}
\author[1,2]{Qiuyu Chen}
\author[2]{Baao Xie\thanks{Corresponding author.}}
\affil[1]{Shanghai Jiao Tong University}
\affil[2]{Ningbo Key Laboratory of Spatial Intelligence and Digital Derivative, Ningbo Institute of Digital Twin, Eastern Institute of Technology, Zhejiang, China}
\date{}
\hypersetup{pdftitle={A Factor-6 Truthful-in-Expectation Mechanism for Cascade Position Auctions},pdfauthor={Qiuyu Chen, Baao Xie}}
\begin{document}
\maketitle
\begin{abstract}
Position auctions require assigning advertisers to positions and determining their display order. Under cascade behavior, earlier clicks prevent users from reaching later placements, coupling the click probabilities of selected advertisements. Existing methods achieve constant-factor welfare approximation and monotone logarithmic-factor approximation separately, leaving open whether both guarantees can be attained simultaneously. The difficulty is that welfare-preserving rounding need not maintain the monotonicity of individual click-through rates required for truthful bidding. We resolve this problem by developing a polynomial-time randomized mechanism that achieves a factor-6 welfare approximation and truthfulness in expectation for arbitrary public advertiser--position click probabilities. The mechanism optimizes a reach-weighted linear relaxation and converts its solution into feasible ordered allocations through matching decomposition, probability-bounded grouping, and calibrated attenuation. This construction compensates for cascade dependence and ensures that each advertiser's expected click-through rate equals exactly one sixth of its fractional click coordinate. The resulting identity transfers the monotonicity of the fractional allocation to the implemented mechanism, enabling exact incentive-compatible payments while preserving the welfare guarantee. The allocation lottery and payments are computable in polynomial bit time, with exact sampling in expected polynomial time.
\end{abstract}
\section{Introduction}\label{sec:intro}
Web advertising auctions allocate limited user attention among advertisers with privately known values per click~\cite{Varian07,EOS07}. On search pages and other sponsored-content displays, a platform decides which advertisements users encounter and how prominently they appear. These decisions determine the welfare generated by the allocation, while auction rules determine the incentives for advertisers to report their values truthfully. Designing computationally efficient mechanisms that achieve high welfare and truthful bidding is therefore a central problem in Web advertising markets~\cite{AGM06}. These challenges are particularly relevant to sponsored placements in AI-generated search responses, where platforms may jointly determine which advertisements appear and in what order~\cite{Balseiro26}.

Position auctions capture the dependence of an advertisement's click probability on its assigned position. Classical position-auction models often use separable advertiser and position effects~\cite{Varian07,AGM06}. In the more general setting studied here, the platform chooses a matching between advertisers and the $m$ available positions and also determines the order in which the selected advertisements are rendered. Under cascade behavior, a user examines the ads sequentially and stops after the first click. Each earlier placement then changes the attention available to all later ones. Consequently, both the matching and its rendering order affect the welfare of the display and the click-through rate received by each advertiser.

For the position-auction model with arbitrary public advertiser--position click probabilities, Balseiro et al.~\cite{Balseiro26} obtain a $4/(1-\varepsilon)$ welfare approximation for every fixed $\varepsilon>0$ and, separately, an $O(\log m)$ approximation with a monotone allocation rule. Whether constant-factor welfare approximation and monotonicity can be achieved together remained open. Randomized rounding may preserve welfare without maintaining the monotonicity of each advertiser's actual expected click-through rate. Since earlier selected advertisements affect the reach of later ones, this dependence is the central obstacle to a truthful constant-factor mechanism.

We resolve this question in the public-click-probability model. There are $n$ advertisers and a display capacity $0\le K\le\min(n,m)$. The conditional click probability $p_{ij}\in[0,1]$ for advertiser $i$ in position $j$ is public; advertiser $i$ reports a nonnegative bid $b_i$ for its private value per click. All probabilities and bids are binary-encoded rationals. Write $b=(b_i)$ and let $\OPT(b)$ be the maximum reported welfare over feasible advertiser--position matchings and all their rendering orders. We prove the following result.

\begin{thm}[Main result]\label{thm:main}
For every public rational matrix $p\in[0,1]^{n\times m}$, capacity $0\le K\le\min(n,m)$, and nonnegative rational bid vector $b$, there is a randomized mechanism satisfying the following properties.
\begin{enumerate}[leftmargin=*,itemsep=2pt,topsep=1pt]
\item Every realized allocation is an ordered matching of size at most $K$, and its expected social welfare is at least $\OPT(b)/6$.
\item Each advertiser's expected click-through rate is nondecreasing in its own bid, with the other bids and public data fixed.
\item Exact nonnegative expected payments make truthful reporting a dominant strategy in expectation. A per-click implementation gives nonnegative utility for every realized click outcome under truthful reporting.
\item The mechanism and its payments are computable in deterministic polynomial bit time. Its allocation lottery has a polynomial-size rational description and can be sampled exactly in expected polynomial bit time.
\end{enumerate}
\end{thm}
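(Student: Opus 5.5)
The plan follows the abstract's blueprint. First I set up a reach-weighted linear relaxation whose optimum upper bounds $\OPT(b)$; then I round it so that each advertiser's realized expected CTR is \emph{exactly} $x_i/6$, where $x_i$ is its fractional click coordinate. Welfare and monotonicity then both follow from that single identity, and the payments come from Myerson's formula.

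\textbf{The relaxation.} I would use variables $y_{ij}\ge 0$ ($i$ placed in position $j$) with matching constraints $\sum_j y_{ij}\le 1$ and $\sum_i y_{ij}\le 1$, plus $\sum_{ij}y_{ij}\le K$. The click coordinate is $x_i=\sum_j p_{ij}y_{ij}$, and I add the reach constraint $\sum_i x_i\le 1$, which holds because under cascade at most one click occurs. The objective is $\max\sum_i b_i x_i$.

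To see that $\mathrm{LP}(b)\ge \OPT(b)$, take any ordered matching. Set $y_{ij}$ equal to the reach of $i$'s slot, and note that the realized CTRs are $p_{ij}\cdot\mathrm{reach}$; this gives a feasible point with the same welfare. I would then fix a deterministic optimal solution, for example lexicographically optimal via a tie-breaking perturbation, so that $b\mapsto y(b)$ is well defined. Monotonicity of $x_i$ in $b_i$ is the standard exchange argument. If $b_i<b_i'$ with optima $y,y'$, then optimality of each gives $b_i x_i + R \ge b_i x_i' + R'$ and $b_i' x_i' + R' \ge b_i' x_i + R$, where $R,R'$ collect the other advertisers' terms. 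Adding the two yields $(b_i'-b_i)(x_i'-x_i)\ge 0$. A consistent tie-break makes this hold for the chosen solution.

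\textbf{Rounding.}
\begin{enumerate}
\item \emph{Matching decomposition.} Apply Birkhoff--von Neumann, or Carathéodory on the size-$\le K$ matching polytope, to write $y$ as a convex combination of polynomially many matchings of size at most $K$. Sample one.
\item \emph{Probability-bounded grouping.} Split the sampled ads into groups by click probability, mainly to isolate ads with $p_{ij}>1/2$. This bounds how much reach earlier ads can consume.
\item \emph{Cascade attenuation.} Render the ads in a fixed order. The probability of reaching an ad is $\prod_{\text{earlier}}(1-p)$; in expectation this is at least $1-\sum x$ over earlier ads, which is roughly at least $1/2$ after a filtering step. Each ad is then kept independently with probability $\theta_i=(1/6)\,x_i/\mathbb{E}[\text{actual CTR of } i \text{ before attenuation}]$.
\end{enumerate}

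\textbf{Main obstacle.} The attenuation couples the ads: discarding an earlier ad increases the reach of later ones. So I would calibrate sequentially in rendering order. For the $k$-th ad, its reach is determined by the earlier ads, whose retention probabilities are already fixed, so I can solve for $\theta_k$ exactly. What I must prove is that the required $\theta_k\le 1$. This needs the pre-attenuation expected CTR to be at least $x_i/6$, and the constant $6$ must come out of this bound. My plan is to get it from three factors: a $1/2$ loss from selecting a group, and a reach factor of at least $1/3$ coming from the reach constraint together with the bounded probabilities within a group. The case split, whether one dominant high-probability ad is present or only many small ones, is where I expect the real difficulty.

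\textbf{Consequences.} The CTR identity $\mathrm{CTR}_i=x_i/6$ gives monotonicity directly, and it gives welfare $\sum b_i x_i/6\ge \OPT(b)/6$. Myerson payments
\[
P_i=b_i\,\mathrm{CTR}_i(b)-\int_0^{b_i}\mathrm{CTR}_i(t,b_{-i})\,dt
\]
yield dominant-strategy truthfulness in expectation and nonnegative payments. These payments are computable exactly because $x_i(\cdot)$ is a piecewise-constant step function with polynomially many breakpoints. The breakpoints can be found by parametric LP, i.e., by tracking optimal bases as $b_i$ varies. For the per-click implementation, I would charge $P_i/\mathrm{CTR}_i\le b_i$ only when a click occurs, which makes realized utility nonnegative under truthful reporting. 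All the probabilities involved are rationals of polynomial size. Exact sampling of the Bernoulli($\theta$) coins in expected polynomial time follows from the usual bit-by-bit comparison against uniform random bits.
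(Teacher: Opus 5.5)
\textbf{There is a genuine gap, in the rounding step.} Your relaxation is exactly the paper's polytope $P$: your constraint $\sum_i x_i\le1$ is its budget row $\sum_{ij}p_{ij}x_{ij}\le1$. The upper bound, the exchange argument for monotonicity, the per-click charging and the exact sampling are all fine. The problem is the rounding, which is the heart of the theorem and which you leave open. Moreover, the architecture you sketch cannot be completed using only membership in $P$, which is all your plan invokes. You sample a matching $M_t$ with probability $\lambda_t$, then display a thinned subset of it, and expect a reach factor ``from the reach constraint''. That constraint only bounds the \emph{average} $\sum_t\lambda_tT_t$, where $T_t=\sum_{e\in M_t}p_e$. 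It says nothing about the mass of the particular matching containing a given edge.

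Here is a concrete failure. Take $n=m=K\ge7$, $p_{ii}=1$, $p_{ij}=0$ for $i\ne j$, and $x=\frac1K\ind_I+(1-\frac1K)\ind_\sigma$, where $I$ is the diagonal and $\sigma$ is the cyclic shift $i\mapsto i+1 \pmod K$.
\begin{itemize}
\item Then $x\in P$ and $q_i(x)=1/K$ for every $i$.
\item Every row is tight and the support is a single $2K$-cycle, so every convex decomposition of $x$ into matchings is this one.
\item Clicks occur only when $I$ is drawn, and then at most one occurs.
\end{itemize}
So any scheme of your type gives $\sum_i\E[\CTR_i]\le1/K<1/6=\sum_iq_i(x)/6$. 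No grouping within the sampled matching (by individual $p$ or otherwise), no ``$1/2$ loss from selecting a group'', and no choice of $\theta\le1$ can repair this. Your own worry that $\theta_k\le1$ might fail is exactly what happens.

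\textbf{The missing idea} is to sample \emph{groups} directly rather than matchings. Split every $M_t$ by next-fit into groups of total standalone probability at most one. Two consecutive groups carry mass $>1$, so $B_t\le2T_t+1$, and hence $\sum_t\lambda_tB_t\le2\sum_ep_ex_e+1\le3$. Now select each group $C_{t,k}$ with probability $\lambda_t/3$. The groups of a heavy matching are then drawn with total probability $\lambda_tB_t/3$, which can be far above $\lambda_t$; in the example above it is $1/3$ instead of $1/K$. The budget pays for this boost, and every edge still enters with probability exactly $x_e/3$.

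Inside the selected group, retain $e_r$ with probability $a_r=1/(2-S_{r-1})$, where the $S_r$ are prefix sums over the whole group, including deleted edges. Then $1-a_sp_{e_s}=(2-S_s)/(2-S_{s-1})$, so the survival product telescopes to $(2-S_{r-1})/2$. Every edge's conditional CTR is therefore exactly $p_{e_r}/2$. This is what your sequential calibration would produce, with $a_r\in[1/2,1]$ automatically and no case split.

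\textbf{A smaller point on payments.} Your claim that $q_i(x^*(t,b_{-i}))$ has polynomially many breakpoints is unsupported, but you do not need it. The LP value is convex in $t$, so $\int_0^{b_i}q_i(x^*(t,b_{-i}))\,dt=V(b)-V(0,b_{-i})$. Your Myerson payment therefore equals the paper's $\frac16\bigl(H_i(b_{-i})-\sum_{h\ne i}b_hq_h(x^*(b))\bigr)$, which is computable with one extra LP.
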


The mechanism is built around a reach-weighted linear relaxation of the cascade allocation problem. Its solution is converted into a lottery over feasible ordered matchings by a sequence of matching decomposition, probability-bounded grouping, and independent thinning. The thinning probabilities account for the complete selected prefix, producing a telescoping cancellation of the cascade survival factors. As a result, each advertiser's actual expected click-through rate equals exactly one sixth of its fractional click coordinate. The common scale allows the linear welfare objective to govern both approximation and bidding incentives.

This coordinatewise identity supplies both guarantees. The reach-weighted relaxation upper-bounds the optimal welfare over all orderings. Its bid-independent feasible region and uniform scaling of actual click-through rates imply monotonicity in each advertiser's bid, while scaled externality payments yield truthfulness in expectation. The lottery and payments also have polynomial-length rational descriptions, ensuring an exact computational realization.

The result applies to arbitrary public click-probability matrices, including zero and unit entries, and to any capacity within the number of available advertisers and positions. The ordering benchmark remains unrestricted, so the guarantee covers the joint choice of placements and display order. More generally, the construction identifies exact control of expected click coordinates as a useful tool for mechanism design when user attention depends on preceding allocations.

Section~\ref{sec:related} reviews the relevant mechanisms, and Section~\ref{sec:model} specifies the allocation and utility conventions. Section~\ref{sec:overview} then explains the full argument before the proofs in Sections~\ref{sec:LP}--\ref{sec:complexity}. An explicit example and additional encoding details appear in the appendix.

\section{Related Work}\label{sec:related}
\paragraph{Sequential attention and position auctions.}
Cascade models of search behavior were studied by Craswell et al.~\cite{Craswell08}, Kempe and Mahdian~\cite{KM08}, and Aggarwal et al.~\cite{AFMP08}. Farina and Gatti~\cite{FG17} developed algorithms and mechanisms for ad- and position-dependent cascade externalities, while Gatti et al.~\cite{GRSV18} examined richer forms of advertising externalities and their computational consequences. Balseiro et al.~\cite{Balseiro26} introduce contextual position auctions with arbitrary advertiser--position click probabilities and an endogenous display order. Their $4/(1-\varepsilon)$ welfare algorithm and $O(\log m)$ monotone algorithm establish the bounds that our factor-6 truthful mechanism improves upon in the incentive-compatible setting.

\paragraph{Truthful approximation and rounding.}
For single-parameter bidders, monotone allocations support truthful payments~\cite{AT01}. Lavi and Swamy~\cite{LS11} show how fractional optimization and suitable lotteries support truthful approximation, while Dughmi, Roughgarden, and Yan~\cite{DRY16} develop convex-rounding mechanisms. Both frameworks optimize over a range of distributions independent of reported values. In an ordinary matching auction, preserving assignment marginals directly controls each bidder's allocation. Under cascade attention, a bidder's actual CTR also depends on the retained advertisements earlier in the displayed list. Our construction handles this dependence through probability-bounded grouping and an independent retention rule whose prefix factors telescope. The resulting identity preserves every advertiser's click coordinate at the same factor, allowing the linear objective and externality payments to describe the welfare and incentives of the actual randomized allocation.

\paragraph{Other advertising externalities.}
Recent Web research addresses spatial interactions between ads~\cite{AWZ26} and position auctions with heterogeneous ad sizes and capacity constraints~\cite{BBCK26}. These settings expose different dependencies between chosen placements. Here the mechanism simultaneously handles arbitrary advertiser--position probabilities, matching feasibility, a cardinality limit, and sequential attention.

\section{Model and Preliminaries}\label{sec:model}
Let $[h]=\{1,\ldots,h\}$ for every positive integer $h$. There are positive integers $n$ and $m$ of advertisers and available positions, respectively. Advertiser $i\in[n]$ has a private nonnegative value $v_i$ per click and submits a nonnegative bid $b_i$. The platform knows the matrix $p=(p_{ij})$, where $p_{ij}\in[0,1]$ is the probability of a click on advertiser $i$ at position $j$, conditional on the user reaching that ad in the displayed order. The cardinality limit $0\le K\le\min(n,m)$ bounds the number of displayed advertisements.

\begin{defn}[Feasible ordered allocation]\label{def:allocation}
A feasible allocation is an ordered list
$A=(e_1,\dots,e_\ell)$ with $e_r=(i_r,j_r)\in[n]\times[m]$, $\ell\le K$, pairwise distinct advertisers $i_r$, and pairwise distinct positions $j_r$. Its cascade click-through rates are
\begin{equation}\label{eq:ctr}
\CTR_i(A)=\begin{cases}
 p_{i_rj_r}\prod_{s<r}(1-p_{i_sj_s}),&i=i_r\text{ for some }r,\\
0,&i\notin\{i_1,\ldots,i_\ell\}.
\end{cases}
\end{equation}
The reported social welfare is $W(b,A)=\sum_i b_i\CTR_i(A)$, and the full-information benchmark is
\begin{equation}\label{eq:opt}
\OPT(b)=\max_{A\text{ feasible}} W(b,A).
\end{equation}
\end{defn}

The list is chosen before users interact with the content. For a fixed displayed list, each reached ad receives a click with its public probability, and a click ends the scan. Randomness in the mechanism determines the displayed list; click randomness follows the stated cascade model. The chosen positions need not appear in numeric order because the platform also controls their rendering order. Every ordering of the selected advertiser--position pairs is included in the benchmark. Unused positions may be placed after displayed content without affecting any click probability.

For a randomized allocation $A(b)$, define the expected click rate $y_i(b)=\E[\CTR_i(A(b))]$. Let $t_i(b)$ be the expected payment. At true value $v_i$, expected utility is
\begin{equation}\label{eq:utilitydef}
u_i(v_i;b)=v_i y_i(b)-t_i(b).
\end{equation}
Truthfulness in expectation requires
\begin{equation}\label{eq:DSIC}
u_i(v_i;(v_i,b_{-i}))\ge u_i(v_i;(a,b_{-i}))
\end{equation}
for every advertiser $i$, every true value $v_i\ge0$, every alternative report $a\ge0$, and every fixed vector of other reports. The computational results below use binary-rational inputs; the exact expected utility inequalities in~\eqref{eq:DSIC} are proved for all nonnegative rational values and bids.

Write $E=[n]\times[m]$ for the advertiser--position pairs, and abbreviate $p_e=p_{ij}$ and $x_e=x_{ij}$ whenever $e=(i,j)$ and $x$ is an array indexed by these pairs. We order $E$ lexicographically, first by advertiser and then by position. This public convention also fixes the order in which a selected list is displayed by our mechanism. Ties in optimization and in matching decomposition are resolved according to the same labels. For $K=0$, we output the empty allocation with zero payments; the remainder of the construction takes $K\ge1$.

\section{Overview of the Mechanism}\label{sec:overview}
The mechanism has two linked objectives. It must approximate the welfare of the best ordered allocation, and it must preserve monotonicity in each advertiser's \emph{actual expected} click probability. We address the first objective with a linear relaxation and the second with a randomized realization that scales every advertiser's click coordinate by the same factor.

Let $P$ denote the public reach-weighted polytope defined in Section~\ref{sec:LP}. Its points are nonnegative arrays $x=(x_{ij})$ satisfying the advertiser, position, and cardinality constraints, together with a total weighted click budget of one. Define the fractional click coordinate $q_i(x)=\sum_j p_{ij}x_{ij}$. Section~\ref{sec:LP} shows that every ordered allocation has a feasible reach-weighted representation in $P$. Therefore the linear optimum $V(b)=\max_{x\in P}\sum_i b_iq_i(x)$ upper-bounds $\OPT(b)$, even when the benchmark may choose any rendering order.

The main construction turns any rational $x\in P$ into a lottery $\calR(x)$ over feasible ordered matchings. First, a capacity-preserving decomposition writes $x$ as a convex combination of matchings (Section~\ref{sec:decomp}). Each matching is partitioned into groups whose total standalone click probability is at most one. The aggregate budget in $P$ bounds the weighted number of groups, allowing one group to be sampled so that each edge $e$ is included with probability $x_e/3$. The selected group is then independently thinned with a retention probability determined by its original prefix click probabilities. The resulting cascade survival factors telescope, giving each selected edge conditional expected click probability $p_e/2$ (Section~\ref{sec:rounding}). Consequently,
\begin{equation}\label{eq:mainidentity}
\E[\CTR_i(\calR(x))]=\frac{q_i(x)}{6}\qquad(i\in[n]).
\end{equation}

This identity is the bridge to mechanism design. We choose a lexicographically fixed maximizer $x^*(b)$ of the public LP at the reported bids and sample from $\calR(x^*(b))$. The LP optimum and the coordinatewise scaling give both the $1/6$ welfare guarantee and monotonicity of the implemented expected click rates. Scaled externality payments then establish truthfulness in expectation (Section~\ref{sec:mechanism}). Finally, exact rational LP optimization, integral-flow decomposition, and fair-bit sampling provide the polynomial bit-complexity guarantees (Section~\ref{sec:complexity}).

\section{A Reach-Weighted Linear Relaxation}\label{sec:LP}
We begin by translating cascade reach probabilities into a linear benchmark. The relaxation respects matching and cardinality feasibility and imposes an aggregate click-probability budget. Its essential property is an upper bound on the welfare of every feasible ordering.

\begin{defn}[Public polytope]\label{def:polytope}
Write $d=nm$ and define
\begin{equation}\label{eq:P}
\begin{split}
P=\bigl\{x\in\R_{\ge0}^{n\times m}:\quad
 &\sum_j x_{ij}\le1\quad(i\in[n]),\\
 &\sum_i x_{ij}\le1\quad(j\in[m]),\\
 &\sum_{i,j}x_{ij}\le K,\quad\sum_{i,j}p_{ij}x_{ij}\le1\bigr\}.
\end{split}
\end{equation}
For $x\in P$, let $q_i(x)=\sum_j p_{ij}x_{ij}$ and
\begin{equation}\label{eq:lpvalue}
V(b)=\max_{x\in P}\sum_{i=1}^{n}b_i q_i(x).
\end{equation}
We let $x^*(b)$ be the lexicographically smallest maximizer of~\eqref{eq:lpvalue}.
\end{defn}

The polytope is compact, contains the zero matrix, and depends only on the public data $p$ and $K$. The variables $x_{ij}$ can represent the reach probabilities of a particular deterministic allocation even though they are not ultimately the probabilities with which the rounding selects the edges. The following lemma establishes the approximation benchmark for our randomized mechanism.

\begin{lem}[Upper bound on cascade welfare]\label{lem:upper}
For every feasible ordered matching $A$, there exists $x^A\in P$ such that $q_i(x^A)=\CTR_i(A)$ for all $i$. Hence $V(b)\ge\OPT(b)$ for every $b\ge0$.
\end{lem}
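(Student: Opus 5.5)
Given $A=(e_1,\dots,e_\ell)$ with $e_r=(i_r,j_r)$, I will set $x^A$ equal to the reach probabilities. For $r\le\ell$, let $R_r=\prod_{s<r}(1-p_{e_s})$, the probability that the user reaches the $r$-th displayed ad. Define $x^A_{e_r}=R_r$, and let $x^A_e=0$ for every pair $e$ not in $A$. By construction, $q_{i_r}(x^A)=p_{e_r}R_r=\CTR_{i_r}(A)$, since advertiser $i_r$ has exactly one nonzero entry. Every advertiser not in $A$ has $q_i(x^A)=0=\CTR_i(A)$. This gives the coordinate identity.

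Next I check feasibility in $P$. Nonnegativity holds because every $R_r\in[0,1]$. The advertisers in $A$ are distinct, so each row of $x^A$ has at most one nonzero entry, and that entry is at most $1$. The same argument applies to columns, because the positions are distinct. The cardinality constraint holds because $\sum_e x^A_e=\sum_{r\le\ell}R_r\le\ell\le K$.

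The only step with real content is the click budget. Here I use the telescoping identity $p_{e_r}R_r=R_r-R_{r+1}$, where $R_{\ell+1}=\prod_{s\le\ell}(1-p_{e_s})$. Summing over $r$ gives
\[
\sum_e p_e x^A_e=\sum_{r=1}^{\ell}(R_r-R_{r+1})=R_1-R_{\ell+1}=1-R_{\ell+1}\le1,
\]
because $R_1=1$. Equivalently, the events "the user clicks the $r$-th ad" are disjoint, so their probabilities sum to at most one. This argument covers the empty list and entries with $p_e\in\{0,1\}$ without any special cases.

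Finally, take $b\ge0$ and let $A$ attain $\OPT(b)$; the maximum exists because there are finitely many feasible ordered matchings. Then $\sum_i b_i q_i(x^A)=\sum_i b_i\CTR_i(A)=W(b,A)=\OPT(b)$. Since $x^A\in P$, the definition of $V(b)$ in~\eqref{eq:lpvalue} gives $V(b)\ge\OPT(b)$. I expect no real obstacle; the only nontrivial point is the click-budget constraint, which the telescoping sum handles.
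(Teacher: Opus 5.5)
Your proof is correct and takes the same approach as the paper. It sets $x^A_{e_r}=\prod_{s<r}(1-p_{e_s})$, checks the row, column, and cardinality constraints from the matching structure, and gets the click budget from the same telescoping identity $\sum_r p_{e_r}R_r=1-\prod_{r\le\ell}(1-p_{e_r})$. (One wording point: if some earlier $p_{e_s}=1$ then $R_r=0$, so advertiser $i_r$ has at most one nonzero entry rather than exactly one, but $q_{i_r}(x^A)=p_{e_r}R_r$ holds either way.)
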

\begin{proof}
Let $A=(e_1,\ldots,e_\ell)$ and put
\[
r_r=\prod_{s<r}(1-p_{e_s}),\qquad
x^A_{e_r}=r_r,
\]
with all other coordinates of $x^A$ equal to zero. Because the selected edges form a matching, each row and column contains at most one nonzero entry, and that entry belongs to $[0,1]$. The capacity inequality holds since $\sum_e x_e^A\le\ell\le K$. The final constraint of~\eqref{eq:P} follows by telescoping:
\begin{equation}\label{eq:reachidentity}
\sum_e p_e x_e^A
=\sum_{r=1}^\ell p_{e_r}r_r
=1-\prod_{r=1}^\ell(1-p_{e_r})\le1.
\end{equation}
Thus $x^A\in P$. The unique potential nonzero term $p_{ij}x^A_{ij}$ for advertiser $i$ is precisely its click-through rate in~\eqref{eq:ctr}. Maximizing the resulting welfare identity over all $A$ proves the claimed upper bound.
\end{proof}

The reach weighting is essential because the standalone probabilities of a displayed list need not sum to one, while their reach-weighted sum always does. Lemma~\ref{lem:upper} therefore gives a linear upper bound without restricting the rendering order.

Let $Q$ denote the larger matching polytope obtained by dropping the final constraint $\sum_{ij}p_{ij}x_{ij}\le1$ from~\eqref{eq:P}. The next section decomposes points of $Q$ into feasible matchings; Section~\ref{sec:rounding} then uses the additional click budget in $P$ to control the randomized selection.

\section{Capacity-Preserving Matching Decomposition}\label{sec:decomp}
The next ingredient is a convex decomposition that preserves advertiser, position, and capacity constraints in every integral outcome. We give an explicit polynomial-time algorithm for rational points of $Q$ so that the lottery and its probabilities can subsequently be computed exactly. For a matching $M\subseteq E$, let $\ind_M$ denote its $n\times m$ incidence matrix.

\begin{lem}[Rational matching decomposition]\label{lem:decomp}
For every rational $x\in Q$, one can compute matchings $M_1,\ldots,M_T\subseteq E$ and positive rational weights $\lambda_1,\ldots,\lambda_T$ satisfying
\begin{equation}\label{eq:decomp}
x=\sum_{t=1}^T\lambda_t\ind_{M_t},\qquad
\sum_{t=1}^T\lambda_t=1,\qquad |M_t|\le K,
\end{equation}
where $T\le d+n+m+2$. Both the decomposition and its rational coefficients can be computed in deterministic polynomial bit time.
\end{lem}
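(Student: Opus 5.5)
The plan is to reduce to the classical integrality of a bipartite flow polytope and then peel off integral matchings one at a time. Given rational $x\in Q$, I would build the standard network: a source $s$, advertiser nodes $i\in[n]$, position nodes $j\in[m]$, and a sink $t$. There are arcs $s\to i$ of capacity $1$, arcs $i\to j$ of capacity $1$ for each $e=(i,j)\in E$, and arcs $j\to t$ of capacity $1$. To enforce the cardinality bound I would route everything through an extra node, $t\to t'$ with capacity $K$. Then $x$ extends to a fractional $s$--$t'$ flow $f$ of value $\sum_e x_e\le K$. The relevant polytope is
\[
F=\{f:\ 0\le f\le u,\ \text{conservation at internal nodes}\},
\]
with flow value left free between $0$ and $K$. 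Its constraint matrix is a network matrix, hence totally unimodular, and the capacities are integral. So $F$ is an integral polytope. Its integral points are exactly the incidence vectors $\ind_M$ of matchings with $|M|\le K$, together with the induced flows. Projecting to the $i\to j$ arcs, $Q$ equals the convex hull of these $\ind_M$.

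To make the decomposition explicit and polynomial, I would run a Carath\'eodory-style greedy peeling rather than invoke the abstract statement.

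\textbf{Step 1.} Given the current point $f\in F$ with $f\neq 0$, find a vertex $\ind_M$ of the face of $F$ that contains $f$ in its relative interior. Concretely, compute an integral flow $g$ satisfying three conditions: $g_a=0$ whenever $f_a=0$, $g_a=u_a$ whenever $f_a=u_a$, and $g\neq0$ if possible. Such a $g$ exists because this face is again an integral polytope. I would obtain it by solving a small max-flow/LP over the face with integral data; for instance, maximize the flow value over the face and take an integral optimal vertex via augmenting paths.

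\textbf{Step 2.} Choose the largest $\mu\in(0,1]$ such that $(f-\mu g)/(1-\mu)\in F$, or $\mu=1$ if $f=g$. This is a ratio test over the arcs, so $\mu$ is rational with polynomially bounded size.

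\textbf{Step 3.} Record $\lambda=\mu$ times the running remaining mass, replace $f$ by $(f-\mu g)/(1-\mu)$, and repeat. Each step makes at least one new arc tight (at $0$ or at capacity), so the dimension of the minimal face strictly drops.

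The number of iterations is therefore at most the number of arc variables plus one. That gives $nm+n+m+2=d+n+m+2$ once I count the arcs $s\to i$, $j\to t$, $t\to t'$ and the final step. The empty matching is allowed as the last matching, which absorbs residual mass so that $\sum\lambda_t=1$.

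The main obstacle is bit complexity. Naively rescaling by $1/(1-\mu)$ each round could blow up the encoding length. I would avoid this by not rescaling. Instead I keep $f_{\text{rem}}=x-\sum_{s<t}\lambda_s\ind_{M_s}$ directly, and note that each remaining point lies on a face determined by the tight constraints. By Cramer's rule, each $\lambda_t$ is the solution of a linear system with the original rational data $x$ and a TU-type coefficient matrix. Hence every $\lambda_t$ has encoding size polynomial in that of $x$, and the total running time is polynomial.
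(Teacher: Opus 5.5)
Your proposal is correct and is essentially the paper's argument: a flow network with a capacity-$K$ cardinality arc, repeated extraction of an integral matching on the minimal face containing the normalized residual, and a maximal step that makes a new bound tight (giving $T\le d+n+m+2$). Like the paper, you also keep an unnormalized residual $(z,w)$ to control encoding length. The paper fixes the matching by lexicographic feasibility tests and bounds denominators directly (each step multiplies the common denominator by at most $K$). Your Cramer's-rule step also works, but it should cite the barycentric system $\sum_t\lambda_t(\ind_{M_t},1)=(x,1)$, which is a $0/1$ system rather than a totally unimodular one. That system is uniquely solvable because your dimension-drop argument makes the chosen matchings affinely independent, and Hadamard's bound then gives polynomial size.
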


\begin{proof}
We first identify the integral vertices of $Q$. Consider a directed circulation network with one node for each advertiser $i$ and position $j$, together with two nodes $s$ and $t$. Include arcs $s\to i$, $i\to j$, $j\to t$, and $t\to s$, with capacities $1$, $1$, $1$, and $K$, respectively. A point $z\in Q$ extends to a feasible circulation by placing $z_{ij}$ on the advertiser--position arcs, the corresponding row and column sums on the outer arcs, and $\sum_e z_e$ on $t\to s$. Conversely, an integral circulation specifies a matching with at most $K$ edges. Integral circulation feasibility, including integer lower bounds on selected arcs, follows from the standard integral-flow property~\cite{Schrijver03}.

We construct the decomposition directly. Maintain a residual pair $(z,w)$, initially $(x,1)$, satisfying
\begin{equation}\label{eq:residual}
w>0,\qquad z/w\in Q.
\end{equation}
Choose an integral matching $M$ on the smallest face of $Q$ containing $z/w$. This is an ordinary integral circulation with the following requirements: edges with $z_e=0$ are forbidden; every row with $\sum_jz_{ij}=w$ and every column with $\sum_i z_{ij}=w$ must be matched; and if $\sum_ez_e=Kw$, the matching must have cardinality $K$. These are integer lower and upper bounds on the corresponding circulation arcs. Because $z/w$ is a fractional circulation satisfying them, the integral-flow theorem guarantees a matching $M$ that does so as well.

We fix $M$ deterministically by choosing the lexicographically smallest feasible incidence vector in the public order. For each edge in order, test whether its indicator can be fixed to zero while retaining integral feasibility; set it to zero if so and otherwise set it to one. At most $d$ circulation-feasibility tests produce the selected matching.

We subtract as much of $M$ as possible while preserving~\eqref{eq:residual}. Specifically, take $\theta$ to be the minimum of the following positive quantities:
\begin{equation}\label{eq:theta}
\begin{gathered}
w,\qquad z_e\quad(e\in M),\\
w-\sum_j z_{ij}\quad(i\text{ unmatched by }M),\\
w-\sum_i z_{ij}\quad(j\text{ unmatched by }M),\\
\frac{Kw-\sum_ez_e}{K-|M|}\quad\text{if }|M|<K.
\end{gathered}
\end{equation}
A candidate in the last three lines is included only when its denominator or its associated unmatched condition applies. Each candidate is positive: $M$ contains only positive residual edges, and every tight row, column, or cardinality constraint is respected by $M$. Record the pair $(\theta,M)$ and update
\begin{equation}\label{eq:update}
z\leftarrow z-\theta\ind_M,\qquad w\leftarrow w-\theta.
\end{equation}
For a matched row, $w-\sum_jz_{ij}$ is unchanged; for an unmatched row, it decreases by $\theta$. The same observation holds for columns. The total cardinality slack decreases by $(K-|M|)\theta$, while the coordinate slacks of edges in $M$ decrease by $\theta$. All bounds in~\eqref{eq:theta} therefore ensure $z\in wQ$ after the update.

Every constraint that was tight remains tight. If the updated $w$ is positive, the minimum in~\eqref{eq:theta} is attained by a constraint that becomes newly tight. The polytope $Q$ has $d+n+m+1$ defining inequalities, so there are at most that many nonterminal iterations and one terminal iteration. At termination $w=0$, and the nonnegative row constraints force $z=0$. Summing~\eqref{eq:update} over the iterations gives~\eqref{eq:decomp}, including $\sum_t\lambda_t=1$.

For completeness, consider the rational encoding. If the initial point has a common denominator $D_x$, every subsequent residual and weight can be represented with a common denominator dividing $D_x\prod_s h_s$, where each $h_s$ is either one or an integer $K-|M_s|\in[1,K]$ arising in~\eqref{eq:theta}. There are only polynomially many steps, so the denominator has at most $\log_2D_x+O((d+n+m)\log(K+1))$ bits. All residuals and weights belong to $[0,1]$ and have polynomial numerator lengths as well. Each step uses polynomial-size circulation tests and rational arithmetic, completing the algorithmic claim.
\end{proof}

Thus each sampled matching can respect the capacity $K$ without assuming that its standalone click probabilities sum to at most one. For $x\in P$, the click budget holds after weighting these matchings by their decomposition coefficients. The next stage converts that average constraint into a valid probability distribution over bounded-probability groups.

\section{Exact Cascade Click-Rate Realization}\label{sec:rounding}
The decomposition alone does not control cascade reach, since a matching $M_t$ may have $\sum_{e\in M_t}p_e>1$. We first split each matching into bounded-probability groups, then use calibrated independent thinning to implement a common click-rate scale. Algorithm~\ref{alg:rounding} defines the full randomized map $\calR(x)$, and the three subsections below prove validity of group selection, the survival correction, and the resulting coordinatewise identity.

\begin{algorithm}[t]
\caption{Exact cascade realization $\calR(x)$}\label{alg:rounding}
\begin{algorithmic}[1]
\Require Rational $x\in P$ and public $(p,K)$.
\Ensure A feasible ordered matching $A$.
\State Compute $x=\sum_{t=1}^T\lambda_t\ind_{M_t}$ as in Lemma~\ref{lem:decomp}.
\For{$t=1,\ldots,T$}
    \State Partition $M_t$ in public order by next-fit into
    \Statex \hspace{\algorithmicindent} indexed groups $C_{t,1},\ldots,C_{t,B_t}$.
\EndFor
\State Sample an indexed group $C=C_{t,k}$ with mass $\lambda_t/3$;
\Statex \hspace{\algorithmicindent} choose $C=\varnothing$ with the remaining mass.
\If{$C=\varnothing$}
    \State \Return $()$.
\EndIf
\State $S\gets 0$, $A\gets()$.
\For{each $e$ in $C$, in public order}
    \State Draw $Z_e\sim\operatorname{Bernoulli}((2-S)^{-1})$
    \Statex \hspace{\algorithmicindent} independently of previous draws.
    \If{$Z_e=1$}
        \State Append $e$ to $A$.
    \EndIf
    \State $S\gets S+p_e$. \Comment{Include deleted edges.}
\EndFor
\State \Return $A$.
\end{algorithmic}
\end{algorithm}

\subsection{Probability-bounded groups}
Process the edges of each matching $M_t$ in increasing public order. Start a new group whenever placing the next edge in the current group would make the sum of its standalone click probabilities exceed one. Write the resulting nonempty groups as
$C_{t,1},\ldots,C_{t,B_t}$, and put
\begin{equation}\label{eq:bin-notation}
T_t:=\sum_{e\in M_t}p_e,\qquad
\sum_{e\in C_{t,k}}p_e\le1.
\end{equation}
An empty matching has zero groups. This is the standard next-fit partition rule with edge weights $p_e\in[0,1]$.

\begin{lem}[Total probability-group mass]\label{lem:bins}
For every matching $M_t$, the next-fit partition satisfies $B_t\le 2T_t+1$. Consequently, for $x\in P$ and any decomposition~\eqref{eq:decomp},
\begin{equation}\label{eq:binmass}
\sum_{t=1}^T\lambda_t B_t\le3.
\end{equation}
\end{lem}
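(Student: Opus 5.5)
The proof has two parts: a per-matching bound from the classical next-fit pair argument, and then averaging against the decomposition weights using the click budget of $P$.

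\emph{Per-matching bound.} Fix $M_t$ with $B_t\ge1$ groups, and write $s_k=\sum_{e\in C_{t,k}}p_e$ for the group sums, so that $\sum_k s_k=T_t$. The key property of next-fit is that for every $k<B_t$ we have $s_k+s_{k+1}>1$. Indeed, group $k+1$ was opened because adding its first edge $f$ to group $k$ would have pushed the sum above one. Hence $s_k+p_f>1$, and since $p_f\le s_{k+1}$ the inequality follows.

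Now pair consecutive groups $(C_{t,1},C_{t,2}),(C_{t,3},C_{t,4}),\ldots$. There are $\lfloor B_t/2\rfloor$ disjoint pairs, each with sum exceeding one, so
\[
T_t>\lfloor B_t/2\rfloor\ge (B_t-1)/2.
\]
This gives $B_t<2T_t+1$. When $B_t=0$, i.e.\ $M_t$ is empty, the bound $B_t\le 2T_t+1$ is trivial. Edges with $p_e=0$ cause no trouble, because they never trigger the opening of a new group. For $B_t=1$ the bound is immediate. So $B_t\le 2T_t+1$ holds in all cases.

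\emph{Averaging.} Multiply the per-matching bound by $\lambda_t$ and sum over $t$. Using $\sum_t\lambda_t=1$ and
\[
\sum_t\lambda_t T_t=\sum_t\lambda_t\sum_{e\in M_t}p_e=\sum_e p_e x_e\le 1,
\]
where the middle equality is \eqref{eq:decomp} and the final inequality is the click budget of $P$, we get
\[
\sum_t\lambda_t B_t\le 2\cdot 1+1=3.
\]

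The only step requiring care is the next-fit pairing inequality: it must be stated with the correct strictness and must handle an odd number of groups, which the floor bound above does. Everything else is linear bookkeeping.
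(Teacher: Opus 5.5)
Your proof is correct and follows the paper's argument: next-fit gives consecutive groups combined mass exceeding one, you pair consecutive groups (your single bound $T_t>\lfloor B_t/2\rfloor$ replaces the paper's even/odd case split), and you average using $\sum_t\lambda_t T_t=\sum_e p_e x_e\le 1$. One small precision point: the strict inequality $T_t>\lfloor B_t/2\rfloor$ needs at least one pair, i.e.\ $B_t\ge 2$, since $T_t$ can be $0$ when $B_t=1$. Your separate treatment of $B_t\in\{0,1\}$ covers this, just as the paper's does.
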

\begin{proof}
Two consecutive nonempty groups have combined probability sum strictly greater than one: the first edge of the later group did not fit in the preceding group. Pair the groups consecutively. When $B_t=2h$, the $h$ disjoint pairs have total mass greater than $h$, and hence $B_t<2T_t$. When $B_t=2h+1$ with $h\ge1$, the first $2h$ groups have mass greater than $h$, giving $B_t<2T_t+1$. The cases $B_t\in\{0,1\}$ obey the stated weak inequality as well.

Multiplying by $\lambda_t$, summing, and using~\eqref{eq:decomp} gives
\[
\sum_t\lambda_t B_t
\le2\sum_t\lambda_t T_t+\sum_t\lambda_t
=2\sum_e p_e x_e+1\le3.
\]
The last inequality is the probability-budget constraint in~\eqref{eq:P}.
\end{proof}

\begin{construction}[Selection of a probability group]\label{con:group}
Select an indexed group $C_{t,k}$ with probability $\lambda_t/3$. With the remaining probability $1-\sum_t\lambda_t B_t/3$, select the empty outcome. Retain the edges of the chosen group in their public order for the subsequent thinning step.
\end{construction}

The distribution is well defined by Lemma~\ref{lem:bins}. An edge $e$ occurs in exactly one group of each matching $M_t$ that contains it. Therefore its probability of reaching this stage is
\begin{equation}\label{eq:edgeprob}
\Pr\{e\text{ belongs to the selected group}\}
=\sum_{t:e\in M_t}\frac{\lambda_t}{3}
=\frac{x_e}{3}.
\end{equation}
All chosen groups have size at most $K$ and use distinct advertisers and positions. Their standalone click probabilities have total at most one, which allows an especially simple survival correction.

\subsection{Independent thinning}
Let $C=(e_1,\ldots,e_\ell)$ be the selected group. Write
\begin{equation}\label{eq:prefix}
S_0=0,\qquad S_r=\sum_{s=1}^r p_{e_s}\le1.
\end{equation}
For each edge $e_r$, independently sample a retention indicator $Z_r$ with
\begin{equation}\label{eq:a}
\Pr(Z_r=1)=a_r:=\frac{1}{2-S_{r-1}}.
\end{equation}
Display the retained subsequence in the original order. The prefix sums $S_r$ are computed from the full chosen group, including edges that the thinning subsequently deletes. Since $S_{r-1}\le1$, each $a_r$ belongs to $[1/2,1]$.

\begin{lem}[Exact conditional click probability]\label{lem:attenuation}
For any selected group $C=(e_1,\ldots,e_\ell)$ with $\sum_{e\in C}p_e\le1$, independent thinning according to~\eqref{eq:a} gives
\begin{equation}\label{eq:half}
\E\bigl[\CTR_{e_r}\mid C\text{ selected}\bigr]
=\frac{p_{e_r}}{2}\qquad(1\le r\le\ell),
\end{equation}
where $\CTR_{e_r}$ denotes the click probability contributed by the original edge $e_r$, taken to be zero when it is deleted.
\end{lem}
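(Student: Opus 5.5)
Conditional on $C$ being selected, edge $e_r$ is displayed exactly when $Z_r=1$. If displayed, it is reached exactly when no retained earlier edge is clicked, because the retained subsequence keeps the original order. So the realized click probability of $e_r$ is
\[
\CTR_{e_r}=Z_r\,p_{e_r}\prod_{s<r}\bigl(1-Z_sp_{e_s}\bigr),
\]
since a deleted earlier edge contributes the factor $1$ and a retained one contributes $1-p_{e_s}$. The indicators $Z_1,\ldots,Z_\ell$ are independent, and the prefix sums $S_{r-1}$ are deterministic given $C$. Hence the expectation factorizes:
\[
\E[\CTR_{e_r}\mid C]=a_r\,p_{e_r}\prod_{s<r}\bigl(1-a_sp_{e_s}\bigr).
\]

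The key step is to evaluate each factor using $S_s=S_{s-1}+p_{e_s}$:
\[
1-a_sp_{e_s}=\frac{2-S_{s-1}-p_{e_s}}{2-S_{s-1}}=\frac{2-S_s}{2-S_{s-1}}.
\]
The product over $s<r$ therefore telescopes to $(2-S_{r-1})/(2-S_0)=(2-S_{r-1})/2$. Multiplying by $a_r=(2-S_{r-1})^{-1}$ leaves exactly $p_{e_r}/2$, which is \eqref{eq:half}.

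The only points needing care are well-definedness and the independence justification; I do not expect a real obstacle. Since $S_{r-1}\le1$ by \eqref{eq:prefix}, we have $2-S_{r-1}\ge1$, so no denominator vanishes and each $a_r\in[1/2,1]$ is a valid probability. This is exactly where the group budget $\sum_{e\in C}p_e\le1$ from the next-fit construction is used. Independence also holds because the $Z_s$ are drawn independently and the $a_s$ depend only on $C$, not on earlier draws. Edge cases such as $p_{e_s}\in\{0,1\}$ need no separate treatment, since the identity is purely algebraic.
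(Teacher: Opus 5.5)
Your proposal is correct and follows the paper's argument. You write the realized click probability as $Z_r p_{e_r}\prod_{s<r}(1-Z_s p_{e_s})$, factor the expectation using independence of the retention indicators, and telescope $1-a_sp_{e_s}=(2-S_s)/(2-S_{s-1})$ to cancel the denominator of $a_r$; your remarks that $a_r\in[1/2,1]$ and that the prefix sums, which include deleted edges, are fixed given $C$ are accurate and match the paper's surrounding discussion.
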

\begin{proof}
For a fixed vector $Z$ of retention decisions, the click probability of edge $e_r$ is
\[
Z_r p_{e_r}\prod_{s<r}(1-Z_sp_{e_s}).
\]
Taking expectation over the independent indicators yields
\begin{equation}\label{eq:conditionalCTR}
a_rp_{e_r}\prod_{s<r}(1-a_sp_{e_s}).
\end{equation}
The factors telescope, because~\eqref{eq:prefix} and~\eqref{eq:a} imply
\begin{equation}\label{eq:telescope}
1-a_sp_{e_s}
=1-\frac{S_s-S_{s-1}}{2-S_{s-1}}
=\frac{2-S_s}{2-S_{s-1}}.
\end{equation}
Hence $\prod_{s<r}(1-a_sp_{e_s})=(2-S_{r-1})/2$. Substituting in~\eqref{eq:conditionalCTR} cancels the denominator of $a_r$ and leaves exactly $p_{e_r}/2$.
\end{proof}

The cancellation is the critical property of the mechanism. The selected group's edges have different positions in the sequence and different prefix sums, yet thinning assigns the same multiplicative factor $1/2$ to every edge's conditional CTR. It also works when $p_e=0$ and $p_e=1$, since every denominator in~\eqref{eq:a} lies in $[1,2]$. All thinning decisions occur before the user clicks, so each realization is an ordinary, feasible cascade allocation.

\subsection{The coordinatewise rounding identity}
Algorithm~\ref{alg:rounding} implements the composition of Construction~\ref{con:group} and the independent thinning rule~\eqref{eq:a}. The public lexicographic rules fix the decomposition and the ordering; thus $\calR(x)$ is determined by $x$, $p$, $K$, and the random bits used to sample the lottery.

\begin{prop}[Exact scaled click-through rates]\label{prop:rounding}
For every rational $x\in P$, $\calR(x)$ is a distribution over feasible ordered matchings of size at most $K$, and
\begin{equation}\label{eq:rounding}
\E[\CTR_i(\calR(x))]=\frac16\sum_jp_{ij}x_{ij}
=\frac{q_i(x)}6\qquad(i\in[n]).
\end{equation}
\end{prop}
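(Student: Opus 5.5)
The proof has two parts: feasibility of every realized outcome, and the expectation identity. For feasibility, I would note that by Lemma~\ref{lem:bins} the group-selection distribution of Construction~\ref{con:group} is well defined. The masses $\lambda_t/3$ are nonnegative, and their total $\sum_t\lambda_tB_t/3$ is at most one, so the empty outcome receives nonnegative mass. Every selected group $C_{t,k}$ is a subset of a matching $M_t$ with $|M_t|\le K$ by Lemma~\ref{lem:decomp}. Hence it uses distinct advertisers and distinct positions and has at most $K$ edges. Thinning only deletes edges and keeps the public order, so every realized list $A$ is a feasible ordered allocation in the sense of Definition~\ref{def:allocation}. The empty outcome is trivially feasible.

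For the identity, I would fix an advertiser $i$. In any realization, $\CTR_i(A)$ equals the sum over edges $e=(i,j)$ of the click probability contributed by $e$, with deleted or unselected edges contributing zero. This holds because a matching contains at most one edge of row $i$, so at most one term is nonzero. By linearity,
\[
\E[\CTR_i(\calR(x))]=\sum_j \E[\CTR_{(i,j)}].
\]
I would then condition on which indexed group is selected. If the group does not contain $e$, or the outcome is empty, the contribution of $e$ is zero. If the selected group $C$ contains $e$, then $C$ satisfies $\sum_{e'\in C}p_{e'}\le1$ by the next-fit rule~\eqref{eq:bin-notation}. Lemma~\ref{lem:attenuation} then gives conditional expectation exactly $p_e/2$, whatever the position of $e$ inside $C$ and whatever the other edges are. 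Combining this with~\eqref{eq:edgeprob} gives
\[
\E[\CTR_e]=\frac{x_e}{3}\cdot\frac{p_e}{2}=\frac{p_ex_e}{6}.
\]
Summing over $j$ yields~\eqref{eq:rounding}.

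The main point needing care is justifying~\eqref{eq:edgeprob}, namely that $e$ lies in exactly one group of each $M_t$ that contains it. Because the group selection is a single draw over indexed groups, the events ``group $C_{t,k}$ is chosen'' are disjoint. Summing $\lambda_t/3$ over those $t$ with $e\in M_t$ therefore gives $x_e/3$ by~\eqref{eq:decomp}. A second point to check is that the conditioning in Lemma~\ref{lem:attenuation} is legitimate. The thinning indicators are drawn independently of the group choice, and the prefix sums are computed from $C$ alone, so the conditional computation applies verbatim. Edge cases such as $p_e\in\{0,1\}$ and $B_t=0$ need no separate treatment, because the lemmas already cover them.
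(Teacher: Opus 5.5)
Your proposal is correct and follows the paper's own proof. Feasibility holds because each selected group is a sub-matching of some $M_t$ with $|M_t|\le K$ and thinning only deletes edges. The identity comes from multiplying the inclusion probability $x_e/3$ from \eqref{eq:edgeprob} by the conditional rate $p_e/2$ from Lemma~\ref{lem:attenuation}, then summing over row $i$, where at most one edge can be displayed.

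One phrasing is slightly off: the retention indicators are not independent of the group choice, since their parameters depend on $C$. What is true, and all that Lemma~\ref{lem:attenuation} needs, is that conditional on $C$ they are independent Bernoulli variables with parameters $a_r$.
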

\begin{proof}
By~\eqref{eq:edgeprob}, an edge $e$ belongs to the selected group with probability $x_e/3$. Conditional on any group containing that edge, Lemma~\ref{lem:attenuation} gives it expected click probability $p_e/2$. Its unconditional contribution is therefore $x_ep_e/6$. Since the edges within every group are a matching, at most one edge for advertiser $i$ can be displayed in any realization. Summing these contributions over all positions $j$ gives~\eqref{eq:rounding}. Group selection, thinning, and deletion preserve the original matching constraints and the cardinality bound.
\end{proof}

The factor six is the product of the group-selection normalization and the cascade correction. The first preserves each edge's selection probability at scale $1/3$; the second equalizes conditional click rates at scale $1/2$. The coordinatewise identity therefore supplies exactly the information needed to derive incentives from linear optimization in the next section.

\section{Welfare and Incentive Guarantees}\label{sec:mechanism}
We now combine the LP upper bound with the exact click-rate realization to prove welfare approximation and truthful bidding. At reported bids $b$, the mechanism optimizes~\eqref{eq:lpvalue} and applies Proposition~\ref{prop:rounding}:
\begin{equation}\label{eq:mechanism}
A(b)\sim\calR(x^*(b)).
\end{equation}
The lottery is determined before the user interacts with the ads. Its construction depends on the bids only through the selected optimizer $x^*(b)$. In particular, all decomposition, grouping, and thinning probabilities are functions of that optimizer and the public click matrix.

\begin{prop}[Constant-factor welfare and monotonicity]\label{prop:welfare}
The allocation rule~\eqref{eq:mechanism} achieves expected welfare
\begin{equation}\label{eq:welfare}
\E[W(b,A(b))]=\frac{V(b)}6\ge\frac{\OPT(b)}6.
\end{equation}
For every advertiser $i$, its expected click probability $y_i(b)$ is nondecreasing in its own report $b_i$, holding $b_{-i},p,K$ fixed.
\end{prop}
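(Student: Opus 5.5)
The welfare identity follows from Proposition~\ref{prop:rounding} and linearity of expectation. Since $x^*(b)$ is a rational optimizer of the rational LP~\eqref{eq:lpvalue} (vertex optima of a rational polytope are rational, and the lexicographically smallest maximizer is a vertex of the optimal face), Proposition~\ref{prop:rounding} applies. It gives $y_i(b)=\E[\CTR_i(A(b))]=q_i(x^*(b))/6$ for every $i$. Therefore
\[
\E[W(b,A(b))]=\sum_i b_i\,y_i(b)=\frac16\sum_i b_i q_i(x^*(b))=\frac{V(b)}6,
\]
and Lemma~\ref{lem:upper} supplies $V(b)\ge\OPT(b)$. The feasibility of every realization is already contained in Proposition~\ref{prop:rounding}.

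For monotonicity, the same identity means it suffices to show that $q_i(x^*(b))$ is nondecreasing in $b_i$. I would use the standard exchange argument, which is valid because $P$ does not depend on $b$. Fix $b_{-i}$ and take $a<a'$. Write $x=x^*(a,b_{-i})$, $x'=x^*(a',b_{-i})$, and let $R(z)=\sum_{k\ne i}b_kq_k(z)$. Optimality of $x$ at bid $a$ and of $x'$ at bid $a'$ over the common feasible set $P$ gives
\[
a\,q_i(x)+R(x)\ge a\,q_i(x')+R(x'),\qquad a'\,q_i(x')+R(x')\ge a'\,q_i(x)+R(x).
\]
Adding the two inequalities yields $(a'-a)(q_i(x')-q_i(x))\ge0$, so $q_i(x')\ge q_i(x)$.

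The only delicate point is the tie-breaking rule. The exchange argument works for any choice of maximizers, because it uses only optimality at each bid vector and never compares selection rules. So the lexicographic choice causes no difficulty, though I would state this explicitly. I would also confirm that $x^*(b)$ is rational, so that Proposition~\ref{prop:rounding}, which is stated for rational points, applies. This holds because the lexicographically smallest maximizer can be obtained by a finite sequence of rational LPs, each fixing one coordinate at its minimum over the current optimal face, and every such optimum is rational. I expect this rationality check to be the main bookkeeping step; the rest is immediate.
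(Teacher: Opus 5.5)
Your proposal is correct and follows the paper's proof: you get the welfare identity from Proposition~\ref{prop:rounding} and Lemma~\ref{lem:upper}, and monotonicity from adding the two optimality inequalities over the bid-independent polytope $P$. Your explicit check that $x^*(b)$ is rational, because it is a vertex, is a harmless addition; the paper leaves this point to Section~\ref{sec:complexity}.
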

\begin{proof}
Using~\eqref{eq:rounding},
\[
\E[W(b,A(b))]
=\sum_i b_i\frac{q_i(x^*(b))}{6}
=\frac{V(b)}6.
\]
Lemma~\ref{lem:upper} gives the last inequality in~\eqref{eq:welfare}.

Fix advertiser $i$, other reports $b_{-i}$, and bids $a<a'$. Write $x=x^*(a,b_{-i})$, $x'=x^*(a',b_{-i})$, and $F(z)=\sum_{h\ne i}b_h q_h(z)$. The optimality of the two LP solutions gives
\begin{align}
a q_i(x)+F(x)&\ge a q_i(x')+F(x'),\label{eq:mon1}\\
a'q_i(x')+F(x')&\ge a'q_i(x)+F(x).\label{eq:mon2}
\end{align}
Adding these inequalities cancels the other bidders' contributions and yields
\begin{equation}\label{eq:monotonicity}
(a'-a)\bigl(q_i(x')-q_i(x)\bigr)\ge0.
\end{equation}
Because $a'-a>0$, we have $q_i(x')\ge q_i(x)$. Proposition~\ref{prop:rounding} gives $y_i(b)=q_i(x^*(b))/6$, so $y_i(a',b_{-i})\ge y_i(a,b_{-i})$. When $a'=a$, our fixed tie-breaking rule selects the same point and lottery.
\end{proof}

The proof illustrates the importance of an identity for each bidder's expected CTR. Optimization over a fixed polytope gives the two inequalities~\eqref{eq:mon1}--\eqref{eq:mon2}; the exact rounding identity connects their coordinates to the click probabilities on which payments are based. We next implement the corresponding incentive-compatible payments without bid integration.

\subsection{Exact externality payments}
For any bid vector $b$, let
\begin{equation}\label{eq:Hi}
H_i(b_{-i})=\max_{x\in P}\sum_{h\ne i}b_hq_h(x)
=V(0,b_{-i}),
\end{equation}
and set the expected payment of advertiser $i$ to
\begin{equation}\label{eq:payment}
t_i(b)=\frac16\left(H_i(b_{-i})
-\sum_{h\ne i}b_hq_h(x^*(b))\right).
\end{equation}
Thus the payment is the loss imposed on the other advertisers in the relaxation, scaled by the same factor used by the allocation lottery. The number $H_i$ is an LP optimum over the existing polytope $P$ with bidder $i$'s coefficient set to zero. Computing all expected payments requires one such additional LP per advertiser.

For per-click charging, compute $y_i(b)=q_i(x^*(b))/6$. If $y_i(b)>0$, charge
\begin{equation}\label{eq:perclick}
\gamma_i(b)=\frac{t_i(b)}{y_i(b)}
\end{equation}
when advertiser $i$ receives a click, and charge zero otherwise. If $y_i(b)=0$, charge zero. The expected total payment equals $t_i(b)$ because the probability of a click on bidder $i$ is $y_i(b)$.

\begin{prop}[Truthfulness in expectation and individual rationality]\label{prop:truthful}
The allocation rule~\eqref{eq:mechanism}, together with payments~\eqref{eq:payment}, is dominant-strategy truthful in expectation for every nonnegative rational value and report. The payments satisfy
\begin{equation}\label{eq:IR}
0\le t_i(b)\le b_i y_i(b)\qquad(i\in[n]).
\end{equation}
With per-click charges~\eqref{eq:perclick}, truthful reporting gives nonnegative realized utility for every click outcome.
\end{prop}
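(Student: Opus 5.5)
The plan is a VCG-style computation carried out entirely inside the relaxation, with everything scaled by the common factor $1/6$ from Proposition~\ref{prop:rounding}.

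\emph{Expected utility.} Fix advertiser $i$, true value $v_i$, others' reports $b_{-i}$, and an arbitrary report $a\ge0$. Write $x_a=x^*(a,b_{-i})$. By Proposition~\ref{prop:rounding} we have $y_i(a,b_{-i})=q_i(x_a)/6$. Hence, by~\eqref{eq:payment},
\[
u_i(v_i;(a,b_{-i}))=\frac16\Bigl(v_iq_i(x_a)+\sum_{h\ne i}b_hq_h(x_a)\Bigr)-\frac16H_i(b_{-i}).
\]
The term $H_i(b_{-i})$ does not depend on $a$. The remaining bracket is the LP objective with coefficients $(v_i,b_{-i})$, evaluated at the point $x_a\in P$.

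\emph{Truthfulness.} Truthful reporting selects $x_{v_i}=x^*(v_i,b_{-i})$, which maximizes exactly that objective over the bid-independent polytope $P$. So the bracket at $x_{v_i}$ is at least the bracket at any $x_a$, which gives~\eqref{eq:DSIC}. Rationality of $v_i$ and $a$ is only needed so that $x^*$ and the lottery are well defined as in Definition~\ref{def:polytope}; the inequality itself is pure LP optimality.

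\emph{Payment bounds~\eqref{eq:IR}.} For the lower bound, $x^*(b)\in P$, so $H_i(b_{-i})=\max_{x\in P}\sum_{h\ne i}b_hq_h(x)\ge\sum_{h\ne i}b_hq_h(x^*(b))$, giving $t_i\ge0$. For the upper bound, $x^*(b)$ maximizes the full objective, and the maximizer of $H_i$ is feasible for it, with $q_i\ge0$ and $b_i\ge0$. Therefore
\[
b_iq_i(x^*(b))+\sum_{h\ne i}b_hq_h(x^*(b))=V(b)\ge H_i(b_{-i}).
\]
Rearranging and dividing by $6$ gives $t_i(b)\le b_iq_i(x^*(b))/6=b_iy_i(b)$.

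\emph{Per-click implementation.} If $y_i(b)>0$, then by~\eqref{eq:IR} the charge satisfies $\gamma_i(b)=t_i(b)/y_i(b)\le b_i$. Under truthful reporting $b_i=v_i$, so a click yields realized utility $v_i-\gamma_i\ge0$, and no click yields utility $0$. If $y_i(b)=0$, nothing is ever charged, and the realized utility is zero or, on a click, $v_i\ge 0$. The expected payment equals $t_i(b)$ because the click probability on $i$ is $y_i(b)$, as already noted before the proposition.

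\emph{Main obstacle.} The one subtle step is ensuring that the utility for a misreport $a$ is evaluated using the actual expected CTR $q_i(x_a)/6$, not a fractional proxy. This is exactly where the identity of Proposition~\ref{prop:rounding} must hold at every rational report, including when tie-breaking selects a different optimizer. Since the identity holds for every rational $x\in P$, no extra argument beyond invoking it is needed.
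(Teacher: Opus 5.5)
Your proposal is correct and follows essentially the same argument as the paper. Both proofs write the expected utility via Proposition~\ref{prop:rounding} with a report-independent $H_i(b_{-i})$ term and use LP optimality over the fixed polytope $P$ for truthfulness. Both derive $0\le t_i\le b_iy_i$ from $H_i\ge\sum_{h\ne i}b_hq_h(x^*(b))$ and $H_i\le V(b)$, and then obtain realized individual rationality from $\gamma_i\le b_i$.
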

\begin{proof}
The first inequality in~\eqref{eq:IR} follows from the definition of $H_i$: it is the maximum of the other bidders' objective, so
$H_i\ge\sum_{h\ne i}b_hq_h(x^*(b))$.
For the second inequality, nonnegativity of $b_i$ and $q_i$ gives $H_i\le V(b)$. Subtract the same other-bidder objective and divide by six to obtain
\begin{align*}
t_i(b)&\le\frac{V(b)-\sum_{h\ne i}b_hq_h(x^*(b))}{6}\\
&=\frac{b_iq_i(x^*(b))}{6}=b_i y_i(b).
\end{align*}

Now fix a true value $v_i$, reports $b_{-i}$ of the other advertisers, and an arbitrary report $a$ by advertiser $i$. Let $x_a=x^*(a,b_{-i})$. By~\eqref{eq:rounding} and~\eqref{eq:payment}, the expected utility of $i$ is
\begin{equation}\label{eq:utility}
\begin{split}
u_i(v_i;(a,b_{-i}))
=\frac16\biggl(&v_iq_i(x_a)\\[-2pt]
&+\sum_{h\ne i}b_hq_h(x_a)-H_i(b_{-i})\biggr).
\end{split}
\end{equation}
The last term is independent of advertiser $i$'s report. If $i$ reports $a=v_i$, the selected point maximizes the first two terms of~\eqref{eq:utility} over $P$. No alternative report can yield higher expected utility. This establishes~\eqref{eq:DSIC} directly.

The payment bound~\eqref{eq:IR} implies expected individual rationality. It also yields the realized guarantee for per-click charging: if $y_i>0$, then $0\le\gamma_i=t_i/y_i\le b_i$. Under truthful reporting $b_i=v_i$, a click generates utility $v_i-\gamma_i\ge0$, while the absence of a click generates utility zero. When $y_i=0$,~\eqref{eq:IR} forces $t_i=0$, so the prescribed zero charge is consistent with the same conclusion.
\end{proof}

\subsection{Optimization over a fixed lottery range}
The exact click-rate identity also identifies a bid-independent range of distributions. For fixed public inputs, consider the collection
\begin{equation}\label{eq:range}
\mathfrak{D}=\{\calR(x):x\in P\cap\Q^{n\times m}\}.
\end{equation}
Proposition~\ref{prop:rounding} identifies the welfare of each lottery in this range as $\sum_i b_iq_i(x)/6$. Every rational bid vector has a rational LP optimizer, so~\eqref{eq:mechanism} achieves the maximum of expected welfare over $\mathfrak D$. The expected welfare associated with a distribution $\calR(x)$ is therefore a linear function of the same fractional click coordinates used in the LP. Selecting $x^*(b)$ maximizes welfare over this range for the reported bids. The payments~\eqref{eq:payment} are the corresponding externality payments, as in the truthful rounding framework~\cite{LS11,DRY16}.

The distribution over displayed lists may change substantially with the bids, yet the expected click rates remain the same fixed multiple of the optimized fractional coordinates. Thus the welfare and incentive analysis requires only the LP solutions and exact coordinate identity. The next section shows that the associated lottery and payments also admit polynomial-length rational descriptions.

\section{Exact Computation and Bit Complexity}\label{sec:complexity}
The mechanism is constructive only if its LP optimizer, matching decomposition, lottery probabilities, and payments have polynomially bounded descriptions. We establish these bounds in the binary-rational model and give an exact sampler.

Let $L$ denote the total binary encoding length of $n,m,K$, the bids, and all entries of $p$, using a numerator and positive denominator for each rational. Since the input contains $nm=d$ matrix entries, $d\le L$ under any standard explicit encoding. We use two standard algorithmic facts: rational linear programs have polynomial-time exact optimization algorithms~\cite{GLS88}, and feasible integer lower-bounded circulations can be computed by integral network-flow algorithms~\cite{Schrijver03}.

\begin{prop}[Polynomial construction and payment computation]\label{prop:polytime}
For binary-rational input $(n,m,K,p,b)$, one can compute $x^*(b)$, a polynomial-size rational description of $\calR(x^*(b))$, and every expected payment $t_i(b)$ and per-click charge $\gamma_i(b)$ exactly in deterministic polynomial bit time. All rationals produced have polynomial numerator and denominator bit lengths.
\end{prop}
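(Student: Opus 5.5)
The plan follows the pipeline stage by stage and bounds the bit length at each step; only the pieces that could blow up need real attention.

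\emph{Computing $x^*(b)$.} The LP~\eqref{eq:lpvalue} has polynomially many constraints, and its coefficients are the entries of $p$, the integers $1$ and $K$, and the products $b_ip_{ij}$, all of polynomial bit length in $L$. By~\cite{GLS88}, the optimal value $V(b)$ can be computed exactly and in polynomial time. To obtain the lexicographically smallest maximizer deterministically, I would fix the optimal face with the equality $\sum_i b_iq_i(x)=V(b)$. Then, for each coordinate in the public order, I would minimize that coordinate over the current face and add the equality that fixes it at its minimum. This takes $d+1$ rational LPs. Each has constraint data of polynomial size, and each added right-hand side is a vertex value of an earlier LP, so by the standard vertex-size bounds it has polynomial size. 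The resulting $x^*(b)$ is then a vertex of a polyhedron whose description has polynomial facet complexity. The same procedure, or a single LP solve, gives each $H_i(b_{-i})=V(0,b_{-i})$ exactly.

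\emph{Describing the lottery.} Lemma~\ref{lem:decomp} applied to $x^*(b)$ returns $T\le d+n+m+2$ matchings with rational weights of polynomial size, in deterministic polynomial time. Next-fit grouping only requires comparing partial sums of entries of $p$, so it produces at most $d$ groups per matching in polynomial time. The group probabilities $\lambda_t/3$, the empty mass $1-\sum_t\lambda_tB_t/3$, and the thinning parameters $a_r=(2-S_{r-1})^{-1}$ are rationals obtained from polynomially many additions and divisions of polynomial-size inputs. Together these form a polynomial-size description of $\calR(x^*(b))$: a list of weighted groups followed by independent Bernoulli parameters.

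\emph{Computing payments.} With $q_h(x^*(b))$ computed exactly, formula~\eqref{eq:payment} gives $t_i(b)$ using a polynomial number of rational operations. We set $y_i=q_i/6$, and when $y_i>0$ we set $\gamma_i=t_i/y_i$. Every quantity here is a sum or product of polynomially many polynomial-size rationals. I expect the main obstacle to be the lexicographic tie-breaking. Solving the sequential LPs naively could let the bit sizes compound across the $d$ rounds. I would control this by noting that each fixed value is a vertex coordinate of an LP whose constraint matrix is always the original matrix plus unit rows and the objective row. Cramer's-rule bounds therefore apply uniformly in terms of $L$, and do not depend on the round index.
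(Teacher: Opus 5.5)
Your pipeline matches the paper's: exact LP, successive coordinate minimization on the optimal face, Lemma~\ref{lem:decomp}, next-fit, rational group masses and thinning parameters, and one extra LP per advertiser for $H_i$. However, the step you single out as the main obstacle is not closed by the argument you give. Cramer's rule bounds a vertex in terms of the constraint matrix \emph{and} the right-hand side. In your augmented system the right-hand sides of the added rows are $V(b)$ and the previously fixed values $v_1,\dots,v_{k-1}$. A round-$k$ vertex coordinate is therefore a cofactor-weighted combination of these right-hand sides divided by $\det A_B$, for the nonsingular subsystem $A_B$ defining it. Its denominator is controlled only by $|\det A_B|$ times the product of the earlier denominators. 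Writing $s_k$ for the bit length of the value fixed in round $k$ and $h$ for a uniform bound on the minors, this yields only $s_k\le h+\sum_{\ell<k}s_\ell$, a bound exponential in $d$. Uniformly bounded matrix entries do not prevent this. The ``polynomial facet complexity'' remark in your first paragraph is circular for the same reason: it presupposes that the added right-hand sides are already small. Each exact LP solve runs in time polynomial in its own input length, so this affects the running-time claim as well as the bit-length claim.

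The missing idea, which the paper uses, is that the lexicographic procedure never leaves the face lattice of $P$. Since $\sum_ib_iq_i(x)\le V(b)$ is valid on $P$, the optimal set is a face of $P$. Minimizing $x_e$ over a face of $P$ gives a face of that face, hence again a face of $P$. Because $P$ is a polytope, each minimum is attained at a vertex of that face, which is a vertex of $P$. Consequently:
\begin{itemize}
\item every fixed value is a coordinate of a vertex of $P$;
\item $V(b)$ is the objective value at a vertex of $P$;
\item $x^*(b)$, the unique point of the final face, is itself a vertex of $P$.
\end{itemize}
The Cramer--Hadamard bound $H$ in~\eqref{eq:H}, computed from $P$'s own rows with the budget row scaled by $D_p$, therefore applies in every round. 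Equivalently, each added equality can be replaced by tightening a subset of $P$'s original inequalities. With this in place, each LP in the sequence has polynomial input size and $\log_2D_x\le dH$ feeds the denominator bound of Lemma~\ref{lem:decomp}. The rest of your argument then goes through as in the paper.
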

\begin{proof}
Let $d=nm$. The LP~\eqref{eq:lpvalue} has $d$ variables and $d+n+m+2$ constraints, all with rational coefficients of input-bounded size. Standard exact rational LP optimization therefore computes $V(b)$ in polynomial bit time. On the optimum face, minimize the coordinates $x_e$ successively in public order, fixing each preceding coordinate at its attained minimum. This involves at most $d$ additional rational LP solves and produces the lexicographically smallest optimum $x^*(b)$.

We make the encoding bound explicit. Let $D_p$ be the product of the denominators appearing in the probability matrix; $\log_2 D_p\le L$. Multiply the probability-budget row of~\eqref{eq:P} by $D_p$ to obtain an integer row whose entries and right-hand side have at most $O(L)$ bits. Every other constraint has coefficients in $\{0,\pm1\}$ and right-hand side at most $K$. By Cramer's rule and Hadamard's determinant inequality, all coordinates of every vertex of $P$ have numerator and denominator lengths bounded by
\begin{equation}\label{eq:H}
H=10(d+1)\bigl(L+\lceil\log_2(d+2)\rceil+1\bigr).
\end{equation}
Each successive lexicographic optimization exposes a face of the original polytope $P$; its minimizing coordinate is attained at an original vertex and therefore satisfies the same bound. The selected optimum $x^*(b)$ is itself a vertex. A common denominator $D_x$ for its $d$ coordinates can thus be chosen with $\log_2D_x\le dH$.

The decomposition algorithm in Lemma~\ref{lem:decomp} takes at most
$N=d+n+m+2$ steps. At each step, a common denominator for residuals and weights is obtained from $D_x$ by multiplying by at most one new positive integer not exceeding $K$. Hence its logarithm is at most
\begin{equation}\label{eq:Rbound}
R_0=dH+N\lceil\log_2(K+1)\rceil+O(\log(d+2)).
\end{equation}
All numerators are of comparable polynomial length because residuals and weights lie in $[0,1]$. Each matching is found by at most $d$ integral-circulation feasibility calls on a polynomial-size network, so the entire decomposition takes polynomial bit time.

There are at most $N$ matchings, each of size at most $K$, and thus at most $NK$ indexed probability groups. Their categorical probabilities $\lambda_t/3$ have $O(R_0)$-bit rational representations. Within a group, each prefix sum $S_r$ is a sum of at most $K$ rational input probabilities. Its numerator and denominator have $O(L)$ bits when formed over the product of the input denominators, and $a_r=(2-S_{r-1})^{-1}$ has the same polynomial encoding property. A common denominator of all categorical masses can be taken as the product of their denominators and therefore also has polynomial length.

Finally, computing $H_i(b_{-i})$ requires one additional exact LP optimization for each advertiser. The values $q_i(x^*(b))$, $V(b)$, $H_i(b_{-i})$, and $t_i(b)$ are rational sums and products of polynomially many polynomial-size operands. For $y_i(b)>0$, computing $\gamma_i=t_i/y_i$ is one further exact rational division. Thus all outputs and intermediate values admit polynomial binary descriptions. Appendix~\ref{app:bitdetails} records the determinant and denominator bounds in greater detail.
\end{proof}

\subsection{Exact sampling of the allocation lottery}
The mechanism first selects a probability group from a finite categorical distribution. Conditional on that choice, it uses at most $K$ independent Bernoulli variables with rational parameters. We give a fair-bit sampler with finite expected running time and no rounding of the probabilities.

\begin{prop}[Exact fair-bit sampling]\label{prop:sampling}
An allocation from $\calR(x^*(b))$ can be generated using independent fair bits, with at most $K+1$ rational-probability sampling operations. The sampling terminates almost surely in expected polynomial bit time. For any $\delta\in(0,1)$, all operations finish within
\begin{equation}\label{eq:sampling-tail}
\left\lceil\log_2\frac{K+1}{\delta}\right\rceil+1
\end{equation}
rejection trials each with probability at least $1-\delta$.
\end{prop}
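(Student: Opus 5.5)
The sampler follows the two-stage structure of Algorithm~\ref{alg:rounding}. Proposition~\ref{prop:polytime} supplies, in deterministic polynomial time, every rational needed: the categorical masses $\lambda_t/3$ for the at most $NK$ indexed groups, together with the empty-outcome mass $1-\sum_t\lambda_tB_t/3$, and, for each group, the retention parameters $a_r=(2-S_{r-1})^{-1}$. The lottery therefore uses one categorical draw followed by at most $K$ Bernoulli draws, each with a rational parameter of polynomial bit length. This gives the bound of at most $K+1$ rational-probability sampling operations. The draws are independent because each operation consumes fresh fair bits. Conditioned on the group, the Bernoulli draws are mutually independent with exactly the parameters~\eqref{eq:a}, so the output has law exactly $\calR(x^*(b))$.

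Each operation is implemented by the same rejection primitive. Write the categorical distribution over a common denominator $D$ as integer weights $w_1,\dots,w_k$ with $\sum_c w_c=D$. A Bernoulli$(u/v)$ is the special case with weights $(u,v-u)$ over denominator $v$. Let $\beta=\lceil\log_2 D\rceil$. One trial draws $\beta$ fair bits to form a uniform integer $U\in\{0,\dots,2^\beta-1\}$. If $U<D$, the trial outputs the unique $c$ with $\sum_{c'<c}w_{c'}\le U<\sum_{c'\le c}w_{c'}$; otherwise it rejects and repeats. Conditional on acceptance, $U$ is uniform on $\{0,\dots,D-1\}$, so the output has exactly the prescribed law. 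Since $D>2^{\beta-1}$, each trial accepts with probability greater than $1/2$. The number of trials is therefore geometric with mean at most $2$. Each trial costs polynomial bit time, because $\beta$ and all the partial sums have polynomial length by Proposition~\ref{prop:polytime}. Summing over at most $K+1$ operations gives expected polynomial bit time and almost-sure termination.

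The tail bound is the only step that needs care, and it is a union bound. For one operation, the probability of needing more than $s$ trials is below $2^{-s}$. With $s=\lceil\log_2((K+1)/\delta)\rceil+1$, this probability is at most $\delta/(2(K+1))\le\delta/(K+1)$. A union bound over the at most $K+1$ operations shows that every operation finishes within $s$ trials with probability at least $1-\delta$, which is~\eqref{eq:sampling-tail}. The remaining point to check is that the common denominator $D$ of the categorical stage, formed as the product of the mass denominators, still has polynomial bit length. This holds because there are at most $NK$ masses, each with $O(R_0)$-bit denominator, and Proposition~\ref{prop:polytime} already records this.
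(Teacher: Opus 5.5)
Your proposal is correct and follows essentially the same route as the paper. Both use one categorical draw over a common denominator plus at most $K$ Bernoulli retention draws, each realized by rejection sampling on $\lceil\log_2 D\rceil$ fair bits with acceptance probability above $1/2$, and then a union bound over the at most $K+1$ operations for the tail estimate (your choice of $s$ even gives failure probability at most $\delta/2$).
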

\begin{proof}
Express the probabilities of the indexed groups, together with the residual probability of the empty outcome, as integer weights over a common positive denominator $D$. Draw a uniform integer from $\{0,\ldots,D-1\}$ and locate it in the corresponding half-open interval of cumulative weights to choose a group.

To obtain an exact uniform integer modulo any $D\ge2$, draw $k=\lceil\log_2D\rceil$ independent fair bits, interpreted as an integer $U\in\{0,\ldots,2^k-1\}$. If $U<D$, accept it; otherwise repeat. A trial is accepted with probability $D/2^k>1/2$, and conditional on acceptance, the output is exactly uniform. For $D=1$, the output is deterministic. An exact Bernoulli event with probability $u/D$ is obtained by checking whether the accepted uniform integer is less than $u$.

After the categorical choice, independently generate the at most $K$ retention indicators using the same construction with their own rational probabilities. The denominator lengths are polynomially bounded, so each trial has polynomial bit cost. The expected number of trials for each operation is less than two, giving expected polynomial total cost and almost-sure termination. The probability that any one operation needs more than $r$ trials is at most $2^{-r}$. A union bound over at most $K+1$ operations proves~\eqref{eq:sampling-tail}.
\end{proof}

\begin{proof}[Proof of Theorem~\ref{thm:main}]
Lemma~\ref{lem:upper} bounds optimal welfare by the LP. Proposition~\ref{prop:rounding} realizes each LP click coordinate at scale $1/6$, and Propositions~\ref{prop:welfare} and~\ref{prop:truthful} establish welfare approximation, monotonicity, and truthful payments. Propositions~\ref{prop:polytime} and~\ref{prop:sampling} establish deterministic polynomial-time construction and expected polynomial-time exact sampling. Together these results prove all parts of the theorem.
\end{proof}

\section{Conclusion and Discussion}\label{sec:conclusion}
We establish a polynomial-time factor-6 welfare approximation that is truthful in expectation for cascade position auctions with arbitrary public advertiser--position click probabilities. The central construction realizes the fractional click coordinates of a reach-weighted linear relaxation at a common scale in actual ordered allocations. This exact relationship makes the same optimization problem serve both welfare approximation and incentive-compatible payments, despite the effect of preceding advertisements on subsequent clicks.

The mechanism separates assignment feasibility from sequential attention. Matching decomposition enforces the advertiser, position, and capacity constraints; probability-bounded grouping controls the mass of potential clicks; and calibrated thinning equalizes the expected click-rate scale across placements. These components suggest a general approach to randomized allocation when the value received by a participant depends on earlier allocation decisions. For contextual advertising platforms, the resulting mechanism links position-specific click estimates to an explicitly implementable, incentive-compatible allocation rule.

Further work can seek sharper constants through more efficient probability grouping or alternative attenuation rules, and exploit structural patterns in advertiser--position click matrices. Other promising directions include heterogeneous display constraints and more general sequential attention models, where coordinatewise realization may provide a route to new welfare and incentive guarantees.

\appendix
\section{A Worked Example}\label{app:example}
Consider three advertisers and three positions, with $K=3$. Let the diagonal entries of $p$ be $p_{11}=4/5$, $p_{22}=3/5$, and $p_{33}=1/5$, and set the off-diagonal entries to zero. Take $x_{11}=x_{22}=x_{33}=1/2$ with all other coordinates zero. The row, column, and capacity constraints hold, and the click budget equals
\[
\sum_e p_ex_e=\frac12\left(\frac45+\frac35+\frac15\right)=\frac45.
\]
The fractional click coordinates are
\begin{equation}\label{eq:exampleq}
(q_1(x),q_2(x),q_3(x))=(2/5,3/10,1/10).
\end{equation}
A valid decomposition assigns weight $1/2$ to the diagonal matching $M=\{(1,1),(2,2),(3,3)\}$ and weight $1/2$ to the empty matching. Because $p_{11}+p_{22}>1$, next-fit places $(1,1)$ in the first group and $(2,2),(3,3)$ in the second. Both groups have total standalone click probability $4/5$. Each is selected with probability $1/6$; the empty outcome receives the remaining probability $2/3$. Every diagonal edge consequently enters thinning with probability $x_e/3=1/6$.

For the singleton group, the retention probability is $1/2$. In the second group, the two retention probabilities are $a_1=1/2$ and $a_2=1/(2-3/5)=5/7$. Their conditional click rates are
\[
\frac12\cdot\frac35=\frac3{10},\qquad
\frac57\cdot\frac15\left(1-\frac12\cdot\frac35\right)=\frac1{10}.
\]
Multiplying by the probability $1/6$ of selecting each nonempty group gives the final expected click coordinates

\begin{equation}\label{eq:examplectr}
\left(\E[\CTR_1],\E[\CTR_2],\E[\CTR_3]\right)
=\left(\frac1{15},\frac1{20},\frac1{60}\right)
=\frac16\left(\frac25,\frac3{10},\frac1{10}\right).
\end{equation}

Displaying the full matching $M$ without attenuation would instead give CTRs $4/5$, $(3/5)(1/5)$, and $(1/5)(1/5)(2/5)$ in its public order. Their survival factors differ across positions. Grouping controls the total probability mass, while thinning removes precisely this variation, producing $\E[\CTR_i]=q_i(x)/6$ for every advertiser. The example also separates the roles of the two constraints: matching decomposition enforces advertiser, position, and capacity feasibility, while the click budget bounds the weighted number of groups in the overall lottery.

\section{Additional Encoding Details}\label{app:bitdetails}
We record the bit-length estimates underlying the exact LP, decomposition, and sampling algorithms of Section~\ref{sec:complexity}.

\paragraph{LP vertices and lexicographic optimization.}
Let $D_p$ be the product of the probability denominators, so $\log_2D_p\le L$. Clearing the probability-budget row yields integer coefficients of $O(L)$ bits. All other defining rows have coefficients in $\{0,\pm1\}$ and right-hand sides at most $K$. For an active nonsingular $d\times d$ system, the ordinary augmented row norms are at most $\sqrt{d+K^2}$ and the budget-row norm is at most $D_p\sqrt{d+1}$. Hadamard's inequality and Cramer's rule bound vertex numerators and denominators by $2^{O(L+d\log(d+2))}$; the explicit $H$ in~\eqref{eq:H} dominates this estimate. Each successive lexicographic minimization is over a face of $P$, with its coordinate minimum attained at an original vertex. The final unique optimum is also a vertex. Its objective value is a sum of $d$ products of polynomial-length rationals.

\paragraph{Decomposition denominators.}
If $D_x$ is the product of the denominators of the $d$ coordinates of $x^*$, then $\log_2D_x\le dH$. Suppose the current residual and weight share a denominator $D$. All step sizes in~\eqref{eq:theta} have denominators dividing $D$, except the cardinality candidate, whose denominator divides $D(K-|M|)$. After each subtraction a common denominator thus has the form $Dh$ for some integer $1\le h\le K$. Over at most $N$ steps it divides $D_x\prod_{s=1}^{N}h_s$ and has at most $dH+N\lceil\log_2(K+1)\rceil$ bits. The numerators have comparable length since residual coordinates and weights lie in $[0,1]$. Each matching is computed by lower-bounded integral circulation with capacities at most $K$, using at most $d$ feasibility tests.

\paragraph{Sampling probabilities.}
There are at most $N$ matchings and $NK$ groups. Their categorical masses $\lambda_t/3$, together with the remaining empty-outcome mass, admit a common denominator of polynomial bit length. Each prefix sum $S_r$ uses a subset of the original probabilities, so its denominator divides the product of the input denominators and has $O(L)$ bits. The same holds for $a_r=(2-S_{r-1})^{-1}$ because $2-S_{r-1}\in[1,2]$. Lemma~\ref{lem:bins} makes the empty-outcome mass nonnegative, and every Bernoulli parameter lies in $[0,1]$.

\end{document}